\theoremstyle{plain}
\newtheorem{thm}{Theorem}
\newtheorem{lem}{Lemma}
\theoremstyle{definition}
\newtheorem{defin}{Definition}
\newtheorem{problem}{Problem}
\theoremstyle{remark}
\newtheorem{remark}{Remark}
\let\oldnl\nl 
\newcommand{\nonl}{\renewcommand{\nl}{\let\nl\oldnl}} 
\journal{ Journal of XXX }
\begin{document}

\begin{frontmatter}



\title{Critical-time metric for risk analysis against sharp input anomalies: computation and application case study}



\author[label1]{Arthur Perodou\corref{cor1}}
\author[label2]{Christophe Combastel}
\author[label2]{Ali Zolghadri}

\cortext[cor1]{Corresponding author. Contact: arthur.perodou@ec-lyon.fr}

\affiliation[label1]{organization={Ampere-lab, Univ. Lyon, Ecole Centrale de Lyon, INSA Lyon, Univ. Claude Bernard Lyon 1, CNRS UMR 5005},
            city={Ecully},
            postcode={69130}, 
            country={France}}

\affiliation[label2]{organization={IMS-lab, Univ. Bordeaux, CNRS UMR 5218},
            city={Talence},
            postcode={33400}, 
            country={France}}						

\begin{abstract}

	This paper investigates the critical-time criteria as a security metric for controlled systems subject to sharp input anomalies  (attack, fault), characterized by having high impact in a reduced amount of time (e.g. denial-of-service, attack by upper saturation).
	The critical-time is the maximal time-horizon for which a system can be considered to be safe after the occurrence of an anomaly.
	This metric is expected to be useful for risk analysis and treatment (prevention, detection, mitigation).
	In this work, the computational problem of the critical-time for uncertain linear systems and several classes of sharp input anomalies, depending on the input channel and the set of abnormal signal values, is formulated based on the quadratic constraints (QC) framework, representing sets by the intersection of QC inequalities and equalities.
	An iterative LMI-based algorithm is then proposed to provide an under-estimate of the critical-time.
	Finally, the potential of the critical-time as a metric for defense design is illustrated and discussed on the quadruple-tank case study through different relevant scenarios.

\end{abstract}

%

\begin{keyword}



Secure Control \sep Cyber-Physical Systems \sep Critical-time \sep Sharp Anomalies \sep Quadratic Constraints 

\end{keyword}

\end{frontmatter}

\section{Introduction}

	The ever-increasing integration of information and communication technologies into engineering systems has led to the emergence of the cyber-physical systems (CPS) vision~\cite{Lee:15}, where cyber and physical parts are jointly co-designed.
	The Achilles heel of any CPS is widely recognized to be its fragility, i.e. its vulnerability to adverse anomalous events such as physical faults and cyber intrusions that may result in catastrophic cascading consequences and unacceptable scenarios~\cite{DPB:19}.
		Although historically a topic of the Computer Science community, these anomalous events have highlighted the need to consider the interconnection between the cyber and physical parts, and it is now recognized that defense design cannot be based solely on traditional IT security methods~\cite{SAJ:15}.
	Therefore, over the past decade, the System and Control community has developed numerous methods to address the problem of security in controlled systems~\cite{DPB:19,CST:19,WWS:16}.

	A key point for CPS security is to develop metrics that can be used efficiently for risk analysis and treatment (prevention, detection, mitigation)~\cite{TSSJ:15,CST:19}.
		In that sense, several criteria have been developed in order to quantify the physical impact of cyber-physical attacks.
		For instance, 
			the maximum impact of stealthy attacks on the quadratic objective function of an optimal control problem 
			is computed and used for the design of the controller and the detector in~\cite{TSJ:15,Tei:21,ZhB:15};
			the distance of the set of reachable states by stealthy attacks to a safe/critical region is evaluated and used as a prevention tool in~\cite{TSSJ:15,EMSZ:20} or as criteria for controller and detector design in~\cite{MSR:20};
		the maximum number of attacks on actuators and sensors under which the state can still be estimated is proposed jointly with an associated mitigation scheme in~\cite{FTD:14};
	a metric defined by the ratio of the energy required by the attack to deviate the system from the state trajectory over the control energy needed for recovery is reported in~\cite{BoS:21}.
	
	Recently, a new security metric named critical-time was proposed in~\cite{PCZ:21}.
	The critical-time is the maximal time-horizon for which a system is considered to be safe after the occurrence of an anomaly (attack, fault), that is the system is not in a critical state allowing to recover to a nominal mode or to  reconfigure for a safe stop.
	The interest of the critical-time is manifold.
	First, this time-based metric can be usefully included in each stage of the risk management framework.
	In the risk analysis stage, it identifies system vulnerabilities and anomalies with the greatest impact, associated with a low critical-time.
	In the risk treatment stage, this metric may be used to prevent anomalies, associated then with high or even infinite critical-time~\cite{EMSZ:20}, to provide a design objective for the synthesis of anomaly detectors~\cite{HKKS:10,GWS:19} or for the development of a mitigation mechanism, such as reconfigurable control~\cite{ZhJ:08}, dissipation block activation~\cite{PCZ:21b} or software rejuvenation~\cite{RGKS:20}.
	Moreover, maximizing the critical-time will leave extra-time for defense mechanisms, including human operators, to detect and mitigate anomalies.
	This may be achieved by tuning remaining degrees of freedom, as in a rational security allocation approach~\cite{CST:19}.
	Furthermore, an originality is to consider the more general concept of anomaly, instead of only attacks, as the impact of certain attacks and faults can be modeled in a similar way (see for instance \cite{YeL:19}).
	This especially allows this metric to be integrated in a secure and safe approach where both attacks and faults are considered.
	In particular, the critical-time generalizes the detection and reconfiguration delays classically used in the fault-management literature~\cite{HKKS:10,ZhJ:08}.

	
	In this paper, the critical-time metric for controlled systems subject to sharp input anomalies, characterized by having high impact in a reduced amount of time (e.g. denial-of-service, attack by upper saturation), is investigated.
		The associated computational problem for an uncertain linear discrete-time model is addressed by taking advantage from the quadratic constraints (QC) framework, representing sets by the intersection of QC inequalities and equalities.
		This problem is reformulated in terms of LMI (Linear Matrix Inequality)~\cite{BEFB:94} and an iterative LMI-based algorithm is proposed to provide an under-estimate of the critical-time.
		Unlike~\cite{PCZ:21}, limited to the worst-case scenario where the attacker has full control of the inputs, the class of sharp input anomalies is significantly extended.
		By explicitly modeling the controller, sharp anomalies that only impact some inputs, and not all of them, can be considered.
		In addition, the inclusion of quadratic constraint equalities (QCE) allows for considering specific sharp anomalies, such as denial-of-service (DoS) or attack by upper saturation.
			Moreover, this paper relaxes the assumption on the initial state which is assumed now to be uncertain, leading to a more practical and robust result.
		Finally, the potential of the critical-time metric for defense design is thoroughly illustrated on the quadruple-tank case study~\cite{Joh:00} through different scenarios.
		In particular, it is shown how the critical time can be maximized by properly selecting the remaining degrees of freedom, leaving extra-time for defense mechanisms.

		The paper is organized as follows.
		In Section~\ref{sec_convex_set_by_QCs}, the QC framework for convex set representations is introduced.		
		In Section~\ref{sec_Pb_statement}, the critical-time computation problem of a controlled system, with an uncertain discrete-time linear model, subject to sharp input anomalies is formulated.
		Section~\ref{sec_Main_result} demonstrates how the computation can be performed based on LMI optimization.
		Finally, the approach is applied on the quadruple-tank case study in Section~\ref{sec_illustration}, while {Section~\ref{sec_conclusion} provides some concluding remarks}.

%
%
	%

\paragraph*{Notations}
		
		
				Lower (upper) case letters are used for vectors (matrices). 
	$\mathbb{R}^+$ denotes the set of non-negative real numbers and $\mathbb{R}^{n\times m}$ the set of real-valued matrices of size $n\times m$.
	$I_n$ and $0_{n \times m}$ are respectively the identity matrix of $\mathbb{R}^{n\times n}$ and the zero matrix of $\mathbb{R}^{n\times m}$. The subscripts are omitted when obvious from the context.
	$X^T$ stands for transpose of $X$ while $M > (\text{resp. } \geq)~0$ denotes positive (semi-) definiteness. 	
	For the sake of brevity, $\llbracket k_1, k_2 \rrbracket$ refers to the integers from~$k_1$ to~$k_2$ $z_{\llbracket k_1, k_2 \rrbracket} := \begin{bmatrix} z_{k_1}^T & \hdots & z_{k_2}^T \end{bmatrix}^T$.
	Bold characters denote decision variables in a design problem or optimization variables in an optimization problem.

\section{Convex Set Representation by Quadratic Constraints Intersection} \label{sec_convex_set_by_QCs}

		This section introduces and motivates the choice of Quadratic Constraints (QC) as a computational framework. 
		First, QC can be used to represent a large class of convex sets.
		In addition, this framework allows for considering the intersection of these sets, without explicitly computing it.
		Finally, its link with LMI optimization provides efficient numerical methods.


	
	\begin{defin} \label{def_QC} \ \\
			Consider a quadratic function~$\sigma_S(\cdot)$ defined by:
			\begin{equation*}
					\forall z \in \mathbb{R}^{n_z}, \qquad \sigma_S(z) := \begin{bmatrix} z \\ 1 \end{bmatrix}^T S \begin{bmatrix} z \\ 1 \end{bmatrix}
			\end{equation*}
			where $S = S^T \in \mathbb{R}^{(n_z+1)\times (n_z+1)}$.
			A vector $z \in \mathbb{R}^{n_z}$ is said to satisfy a QC associated with~$\sigma_S$ if $\sigma_S(z) \geq 0$	holds.
			In addition, $z \in \mathbb{R}^{n_z}$ is said to satisfy a QC equality (QCE) associated with~$\sigma_S$ if $\sigma_S(z) = 0$	holds.
	\end{defin}

		It can be inferred from Definition~\ref{def_QC} that the most common classes of convex sets classically used~\cite{AFG:21} can be exactly expressed in the QC framework.
		%
			%
			For example, an ellipsoid and a degenerated ellipsoid, such as an halfspace, is inherently represented as a single QC~\cite{Jon:01b}.
			The set $\mathcal{E}$ defined by
			\begin{equation*}
				\mathcal{E} := \left\{ z \in \mathbb{R}^{n_z} \left| \sigma_{P_{\mathcal{E}}}(z) \geq 0 \right. \right\}, \quad P_{{\mathcal{E}}} := \begin{bmatrix} Q & s \\ s^T & r \end{bmatrix}
			\end{equation*}
			leads to an ellipsoid when~$Q < 0$, a degenerated ellipsoid when~$Q \leq 0$ and a halfspace when~$Q = 0$.
			The intersection of QCs offers then the possibility to represent the intersection of ellipsoids and degenerated ellipsoids.
			In particular, a polytope in~$H$-representation
			$\mathcal{H} := \left\{ z \in \mathbb{R}^{n_z} \left| \ Hz \leq h  \right. \right\}$, $h \in \mathbb{R}^{n_h}$, 
			is represented by the intersection of halfspaces:
			\begin{equation*}
					\mathcal{H} = \bigcap_{i=1}^{n_h}{\left\{ z \in \mathbb{R}^{n_z} \left| \ \sigma_{P_{\mathcal{H}_i}}(z) \geq 0 \right. \right\}}, \quad
					P_{\mathcal{H}_i} := \begin{bmatrix} 0 & H_i^T \\ H_i & 2h_i \end{bmatrix}
			\end{equation*}	
				where	$H_i$ the~$i$th line of~$H$ and~$h_i$ the~$i$th element of~$h$.
				Moreover, a QCE especially enables to represent convex sets defined by a linear equality, such as a hyperplane
				$\mathcal{HP} := \left\{ z \in \mathbb{R}^{n_z} \left| \ a^Tz = b  \right. \right\}$
			where $a \in \mathbb{R}^{n_z}$ and ${b \in \mathbb{R}}$. 
			Planes, that are the intersection of hyperplanes, can then be represented by the intersection of QCEs.
			%
			Finally, considering the intersection of both QCs and QCEs significantly extends the scope. For instance, a zonotope
			$\mathcal{Z} = \left\{ z \in \mathbb{R}^{n_z} \left| z = c + G\lambda, \ \lambda \in [-1,1]^{n_{\lambda}} \right. \right\}$
				can be expressed as the intersection of $n_z$~ hyperplanes and $n_{\lambda}$~degenerated ellipsoids, i.e. QCEs and QCs.
				Zonotope bundles, that are sets defined as the intersection of zonotopes, are also exactly represented, as well as constrained zonotopes
				${\mathcal{CZ} = \left\{ z \in \mathbb{R}^{n_z} \left| z = c + G\lambda, \ D \lambda + d = 0, \ \lambda \in [-1,1]^{n_{\lambda}}  \right. \right\}}$
				which result from the intersection of a zonotope with hyperplanes.
				Up to the numerical reliability of optimization solvers, QCs provide a useful design framework to exactly define and operate over a wide class of convex sets. 
				

	An important result of the QC framework is the S-procedure lemma that provides a sufficient condition, in the form of a feasibility optimization problem under LMI constraints, to test the satisfaction of a QC by a set of vectors defined by the intersection of QCs~\cite[Chap.~2]{BEFB:94}.
		The next lemma provides an adapted version when a QC is checked over multiple QCs and QCEs~(see \cite[Theorem~2.3.3]{Sco:97}).
	
	%
	
	\begin{lem}[S-procedure for intersection of QCs and QCEs] \label{lem_Sprocedure} \ \\
			Consider $(N+1)$ QCs associated with~$\sigma_S$, $\sigma_{P_1}$, $\ldots$, $\sigma_{P_N}$, and~$M$ QCEs associated with~$\sigma_{Q_1}$, $\ldots$, $\sigma_{Q_M}$.
			\\
			Then $(ii) \Rightarrow (i)$.
		\begin{itemize}
			\item[$(i)$] The QC $\sigma_S(z) \geq 0$ holds for all $z \in \mathbb{R}^{n_z}$ such that
				\begin{equation*}
						\begin{aligned}
								\forall p \in \llbracket 1, N \rrbracket & \quad & \sigma_{P_p}(z) \geq 0 \\
								\forall q \in \llbracket 1, M \rrbracket & \quad & \sigma_{Q_q}(z) = 0 
						\end{aligned}
				\end{equation*}
			\item[$(ii)$] $\exists \left\{\boldsymbol{\tau_{p}} \in \mathbb{R}^+\right\}_{p=1}^{N}$, $\exists \left\{ \boldsymbol{\mu_{q}} \in \mathbb{R} \right\}_{q=1}^{M}$, such that
			\begin{equation}
					S - \sum_{p=1}^{N}{\boldsymbol{\tau_p}P_p} - \sum_{q=1}^{M}{\boldsymbol{\mu_q}Q_q} \geq 0
					\label{eq_Sproc_lemma}
			\end{equation}
		\end{itemize}
	\end{lem}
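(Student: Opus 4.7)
\medskip

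\noindent\textbf{Proof plan.} The statement only claims the ``easy'' (sufficient) direction of the S-procedure, so the proof should be a direct verification rather than a duality argument. The plan is to assume $(ii)$, pick an arbitrary $z$ satisfying the hypotheses of $(i)$, and evaluate the matrix inequality \eqref{eq_Sproc_lemma} at the augmented vector $\tilde{z} := [z^T \ 1]^T$.

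\medskip

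\noindent First, I would recall from Definition~\ref{def_QC} that for any symmetric matrix $R$ of appropriate size, $\sigma_R(z) = \tilde{z}^T R \tilde{z}$. Then, congruence-multiplying the LMI in $(ii)$ on the left by $\tilde{z}^T$ and on the right by $\tilde{z}$ preserves the sign (since $\tilde{z}$ is a real vector), yielding
\begin{equation*}
\sigma_S(z) - \sum_{p=1}^{N} \tau_p \, \sigma_{P_p}(z) - \sum_{q=1}^{M} \mu_q \, \sigma_{Q_q}(z) \geq 0.
\end{equation*}

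\noindent Next I would handle the two sums separately. For the inequality constraints, each term $\tau_p \sigma_{P_p}(z)$ is nonnegative because $\tau_p \geq 0$ and $\sigma_{P_p}(z) \geq 0$ by hypothesis. For the equality constraints, each term $\mu_q \sigma_{Q_q}(z)$ vanishes because $\sigma_{Q_q}(z) = 0$; this is precisely why the multipliers $\mu_q$ can be sign-indefinite, in contrast to the $\tau_p$. Rearranging gives $\sigma_S(z) \geq 0$, which establishes $(i)$ since $z$ was arbitrary.

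\medskip

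\noindent I do not expect a real obstacle here: the result is the standard lossy S-procedure extended with equality constraints, and the proof is essentially one line once the augmented quadratic form is introduced. The only point requiring care is the treatment of the equality multipliers $\mu_q$, where one must remark that the sign restriction is unnecessary precisely because $\sigma_{Q_q}(z)=0$ kills the corresponding term in the quadratic expansion. The converse implication $(i) \Rightarrow (ii)$ is not claimed (and in general does not hold for $N \geq 2$ or $M \geq 1$), so no lossless-direction argument needs to be attempted.
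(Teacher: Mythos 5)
Your argument is correct and is exactly the standard verification that the paper itself delegates to its references (\cite[Chap.~2]{BEFB:94} and \cite[Theorem~2.3.3]{Sco:97}): evaluate the LMI at $\tilde{z}=[z^T\ 1]^T$, drop the nonnegative terms $\tau_p\sigma_{P_p}(z)$ and the vanishing terms $\mu_q\sigma_{Q_q}(z)$, and conclude $\sigma_S(z)\geq 0$. Your remark on why the equality multipliers $\mu_q$ may be sign-indefinite is the right point of care, and nothing further is needed since only the sufficient direction $(ii)\Rightarrow(i)$ is claimed.
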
		
	
	\begin{remark}
			The scalars $\boldsymbol{\tau_p}$ and $\boldsymbol{\mu_q}$, classically called multipliers, are the optimization variables in the constraint (\ref{eq_Sproc_lemma}), leading then to a feasibility problem over LMI constraints.
	\end{remark}


\section{Problem Setting} \label{sec_Pb_statement}



		\subsection{Nominal system description}
		
			Consider an uncertain linear discrete-time system~$(\Sigma)$:
		\begin{equation*}
		(\Sigma)
			\left\{
				\begin{aligned}
				x_{k+1} &= A_{\Sigma} x_k + B_{\Sigma} u_k^a + W_{\Sigma}w_k,
			\quad
			u_k^a \in \mathcal{U} \\
				y_k &= C_{\Sigma} x_k + V_{\Sigma} v_k 
				\end{aligned}				
			\right.
		\end{equation*}
		where $x_k \in \mathbb{R}^n$ is the system state evaluated at time step~$k$, $u_k^a \in \mathbb{R}^m$ the applied input and $y_k$ the measured output. 
		The applied input~$u_k^a$ is constrained to belong to the input-limitation set~$\mathcal{U} \subseteq \mathbb{R}^{m}$.
		The vectors~$w_k \in\mathbb{R}^{n_w}$ and~$v_k \in \mathbb{R}^{n_v}$ represent additive state disturbances and measurement noises. 

		The system~$(\Sigma)$ is interconnected, through a communication network, to a controller~$(\mathcal{K})$ modeled as:
		\begin{equation*}
		(\mathcal{K})
			\left\{
				\begin{aligned}
				\xi_{k+1} &= A_K \xi_k + B_K y_k \\
				u_k &= C_K \xi_k + D_K y_k 
				\end{aligned}
			\right.
		\end{equation*}			
		where $\xi_k \in \mathbb{R}^l$ is the controller state and $u_k \in \mathbb{R}^m$ is the controller output.
		Under normal conditions, $u_k^a = u_k$.
		
		\subsection{Sharp input anomalies}
		
	In this paper, sharp anomalies, producing severe impact in a reduced amount of time, on inputs are considered.
	This especially includes anomalies such as sudden actuator failures, DoS and attack by saturation.
	From a security point of view, sharp attacks result typically from adversaries with little knowledge of the overall system, and may be accomplished from different manners (fragility of a communication network protocol, communication channel jamming,  direct attack on the controller, or even resulting from false-data measurements)~\cite{SRP:19}.
		%
		%
		Unlike the more sophisticated class of stealthy attacks, sharp anomalies may be detected by a passive detection scheme, for instance based on an observer or using a $\chi^2$-detector~\cite{DPB:19,CST:19}.		
		
			The anomalies impact on the applied input is modeled as
			\begin{equation*}
					u_k^a = \Gamma_u u_k + \Gamma_a a_k
			\end{equation*}
	where
	$a_k \in \mathcal{A}$ is an unknown signal and~$\mathcal{A} \subseteq \mathbb{R}^{m_a}$ the set of possible values for~$a_k$, 
	while~$\Gamma_u \in \mathbb{R}^{m \times m}$ and~$\Gamma_a \in \mathbb{R}^{m \times m_a}$ denote the input channels on which the anomalies appear, and are typically binary-valued $\{0,1\}$-matrices.
	A large class of sharp input anomalies can be modeled by the triplet~$(\Gamma_u, \Gamma_a, \mathcal{A})$.
	For instances, 
		$(\Gamma_u, \Gamma_a, \mathcal{A}) = (0,I,\mathcal{U})$ models the worst-case scenario where an adversary may apply any admissible input,
		$(\Gamma_u, \Gamma_a, \mathcal{A}) = (0,I,0)$ may be associated with a DoS,
		$(\Gamma_u, \Gamma_a) = (I,I)$ models additive input faults or data-injection attacks, that are with saturation objective when $\mathcal{A} = \{a = \alpha \cdot u^a_{max}\}$ where $u^a_{max}$ the maximum admissible  input and $\alpha \in \mathbb{R}$ a large number, 
		while 
		\begin{equation*}
				\Gamma_u = \begin{bmatrix} I & 0 & 0 \\ 0 & 0_{1\times 1} & 0 \\ 0 & 0 & I \end{bmatrix}
				\quad
				\text{and}
				\quad
				\Gamma_a = \begin{bmatrix} 0 & 0 & 0 \\ 0 & 1 & 0 \\ 0 & 0 & 0 \end{bmatrix}
		\end{equation*}
		model anomalies that appear on the $i ^{th}$ input channel.	
	Normal operating conditions are obtained when~$(\Gamma_u,\Gamma_a) = (I,0)$.

	\subsection{Controlled system subject to sharp input anomalies}

		\begin{figure}[!htb]
      \centering
      \includegraphics[width=0.75\columnwidth]{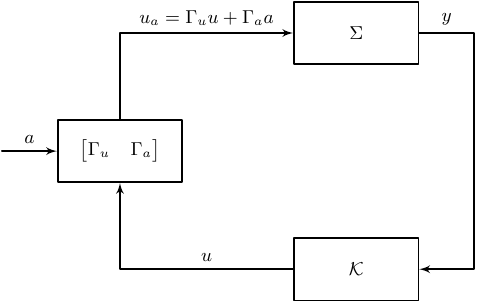}
      \caption{Controlled system subject to sharp input anomalies}
      \label{fig_actuator_anomalies}
   \end{figure}

		The dynamics of the interconnected system~$(\Sigma-\mathcal{K})$ subject to input anomalies (Fig.~\ref{fig_actuator_anomalies}) is then given by:
		\begin{equation}
				(\Sigma-\mathcal{K})
				\left\{
					\begin{aligned}
					\overline{x}_{k+1} &= A \overline{x}_k + W \overline{w}_k + B_a a_k 
					\\
					u^a_k &= C_u \overline{x}_k + V_u \overline{w}_k + \Gamma_a a_k, 
					\quad
					u^a_k \in \mathcal{U}
					\end{aligned}
				\right.
				\label{eq_SKA}
		\end{equation}		
				where $\overline{x}_k := \begin{bmatrix} x_k^T & \xi_k^T \end{bmatrix}^T$, $\overline{w}_k := \begin{bmatrix} w_k^T & v_k^T \end{bmatrix}^T$ and
		\begin{equation*}
				A  = \begin{bmatrix} A_{\Sigma} +  B_{\Sigma} \Gamma_u D_K C_{\Sigma} & B_{\Sigma} \Gamma_u C_K  \\ B_K C_{\Sigma} & A_K \end{bmatrix},
				\
				B_a = \begin{bmatrix} B_{\Sigma} \Gamma_a \\ 0 \end{bmatrix}
		\end{equation*}
		\begin{equation*}
				W = \begin{bmatrix} W_{\Sigma} & B_{\Sigma}\Gamma_uD_K V_{\Sigma} \\ 0 & B_K V_{\Sigma} \end{bmatrix}, \quad		
			\begin{aligned}
					C_u &= \Gamma_u\begin{bmatrix} D_K C_{\Sigma} & C_K \end{bmatrix} \\
					V_u &= \Gamma_u\begin{bmatrix} 0 & D_K V_{\Sigma}  \end{bmatrix}
			\end{aligned}
		\end{equation*}
	This interconnected system~$(\Sigma - \mathcal{K})$ is provided with a safety set~$\overline{\mathcal{S}} \subseteq \mathbb{R}^{n+l}$, i.e. a set of states~$\overline{x}_k$ for which~$(\Sigma-\mathcal{K})$ is considered to be safe at each time step~$k$. 
	For the ease of computation, and without loss of generality, it is assumed that the anomalies start from~$k_a=0$.
	The initial state~$\overline{x}_0$ is assumed to belong to a known and safe set~$\overline{\mathcal{X}}_0 \subseteq \overline{\mathcal{S}}$, and the additive uncertainties~$\overline{w}_k$ to
	a bounded set~$\overline{\mathcal{W}} \subseteq \mathbb{R}^{n_w + n_v}$.

	The defense objective is to design off-line a strategy such that safe operation is preserved after the occurrence of an anomaly.
	In this perspective, an (under)-estimation of the critical-time, that is the maximum time window allowed before the system leaves the safety set, appears as a crucial information to be provided to the defense designer.

%
				%

		\subsection{Problem formulation}

	The problem considered in this paper is the computation of the critical-time for the interconnected system~(\ref{eq_SKA}) subject to sharp input anomalies defined by $(\Gamma_u, \Gamma_a, \mathcal{A})$, based on the QC framework.	
	To this end,
	the safety set~$\overline{\mathcal{S}}$ is assumed to be exactly represented or inner-approximated by the intersection of~QCs (Definition~\ref{def_QC}):
	\begin{equation}
			\bigcap_{s=1}^{n_{\overline{\mathcal{S}}}}{\overline{\mathcal{S}}_s} \subseteq \overline{\mathcal{S}},
			\qquad
			\overline{\mathcal{S}}_s := \left\{ \overline{x} \in \mathbb{R}^{n+l} \left| \sigma_{\overline{S}_s}(\overline{x}) \geq 0 \right. \right\}
			\label{eq_def_inter_sets_safety}
	\end{equation}	
	and the initial set~$\overline{\mathcal{X}}_0$, the input-limitation set~$\mathcal{U}$, the deviation set~$\overline{\mathcal{W}}$ and the anomaly-impact set~$\mathcal{A}$ are exactly represented or over-approximated by the intersection of QCs and QCEs:
	\begin{equation}
			\overline{\mathcal{X}}_0 \subseteq \left\{ \cap_{g=1}^{n_{\overline{\mathcal{X}}_0}}{\ \overline{\mathcal{X}}_g} \right\} \bigcap \left\{ \cap_{h=1}^{m_{\overline{\mathcal{X}}_0}}{\ \overline{\mathcal{X}}_h^e} \right\} 
			\label{eq_def_inter_sets_initial_set}
	\end{equation}	
	\begin{equation}
			\mathcal{U} \subseteq \left\{ \cap_{i=1}^{n_{\mathcal{U}}}{\ \mathcal{U}_i} \right\} \bigcap \left\{ \cap_{j=1}^{m_{\mathcal{U}}}{\ \mathcal{U}_j^e} \right\} 
			\label{eq_def_inter_sets_input}
	\end{equation}
	\begin{equation}
			\overline{\mathcal{W}} \subseteq \left\{ \cap_{p=1}^{n_{\overline{\mathcal{W}}}}{\ \overline{\mathcal{W}}_p} \right\} \bigcap \left\{ \cap_{q=1}^{m_{\overline{\mathcal{W}}}}{\ \overline{\mathcal{W}}_q^e} \right\}
			\label{eq_def_inter_sets_dev}
	\end{equation}
	\begin{equation}
			\mathcal{A} \subseteq \left\{ \cap_{r=1}^{n_{\mathcal{A}}}{\ \mathcal{A}_r} \right\} \bigcap \left\{ \cap_{t=1}^{m_{\mathcal{A}}}{\ \mathcal{A}_t^e} \right\}
			\label{eq_def_inter_sets_A}
	\end{equation}
		
\noindent	with
	$\overline{\mathcal{X}}_g :=	\{ \overline{x}_0~\in~\mathbb{R}^{n + l} \ | \ \sigma_{\overline{X}_g}(\overline{x}_0)~\geq~0  \}$, 
	$\overline{\mathcal{X}}^e_h :=	\{ \overline{x}_0~\in~\mathbb{R}^{n + l} \ | \ \sigma_{\overline{X}^e_h}(x_0)~=~0 \}$,
	$\mathcal{U}_i :=	\{ u^a~\in~\mathbb{R}^{m} \ | \ \sigma_{U_i}(u^a)~\geq~0 \}$, 
	$\mathcal{U}^e_j :=	\{ u^a~\in~\mathbb{R}^{m} \ | \ \sigma_{U^e_j}(u^a)~=~0 \}$,
	$\overline{\mathcal{W}}_p :=	\{ \overline{w}~\in~\mathbb{R}^{n_v+n_w} \ | \ \sigma_{\overline{W}_p}(\overline{w})~\geq~0 \}$, 
	$\overline{\mathcal{W}}^e_q :=	\{ \overline{w}~\in~\mathbb{R}^{n_v+n_w} \ | \ \sigma_{\overline{W}^e_q}(w)~=~0 \}$,
	$\mathcal{A}_r :=	\{ a~\in~\mathbb{R}^{m_a} \ | \ \sigma_{A_r}(a)~\geq~0 \}$, 
	and
	${\mathcal{A}^e_t :=	\{ a~\in~\mathbb{R}^{m_a} \ | \ \sigma_{A^e_t}(a)~=~0 \}}$.	

	\begin{remark}
			The introduction of QCEs makes possible to exactly represent anomalies impact leading to constant~$a_k$, such as in the cases of DoS or attack by upper saturation.
	\end{remark}
	
	
	The problem can now be stated as follows.
	
	\begin{problem} \label{pb_CT_computation} \ \\
		\textsc{Given} the system (\ref{eq_SKA}) subject to input anomalies modeled by~$(\Gamma_u,\Gamma_a,\mathcal{A})$,
		\textsc{Given} a safety set~$\overline{\mathcal{S}}$, an initial set~$\overline{\mathcal{X}}_0 \subseteq \overline{\mathcal{S}}$, an input-limitation set~$\mathcal{U}$ and a deviation set~$\overline{W}$, 
		\textsc{Given} the approximation (\ref{eq_def_inter_sets_safety})-(\ref{eq_def_inter_sets_A}) in the QC framework, \\
		\textsc{Find} the maximal time-index $\boldsymbol{k_f}$ such that system (\ref{eq_SKA}) is safe over the time-window $\llbracket 0, \boldsymbol{k_f} \rrbracket$ for all initial states of~$\overline{\mathcal{X}}_0$, for all anomalies of~$\mathcal{A}$ resulting in admissible inputs of~$\mathcal{U}$, and for all elements of~$\overline{W}$:
		\begin{equation*}
					\max_{\boldsymbol{k_f} \in \mathbb{N}}{\boldsymbol{k_f}}
		\end{equation*}
		subject to $\forall \overline{x}_0 \in \overline{\mathcal{X}}_0$, $\forall k \in \llbracket 0, \boldsymbol{k_f} \rrbracket$, $\forall a_k \in \mathcal{A}$ such that $u_k^a \in \mathcal{U}$, $\forall \overline{w}_k \in \overline{W}$, $\overline{x}_k \in \overline{S}$.
	\end{problem}

\section{Critical-time computation} \label{sec_Main_result}


	In this section, first the following problem is solved (Theorem~\ref{thm_CT_computation}): how to check if the state~$\overline{x}_{k_f}$ belongs to the safety set~$\overline{\mathcal{S}}$, over all initial states of~$\overline{\mathcal{X}}_0$, occurred anomalies in~$\mathcal{A}$ resulting in admissible inputs in~$\mathcal{U}$, and uncertainties in~$\overline{\mathcal{W}}$, for a fixed time-index~$k_f$. 		
	Then, an iterative algorithm (Algorithm~\ref{Algo_CT_computation}) is proposed to find the maximal~$\boldsymbol{k_f}$, providing an underestimate of the critical-time.

	\begin{thm} \label{thm_CT_computation} \ \\
	 Let $(\Sigma - \mathcal{K})$ be the interconnected system modeled by~(\ref{eq_SKA}).
		Assume that the safety set~$\overline{\mathcal{S}}$, the initial-state set~$\overline{\mathcal{X}}_0$, the  input-limitation set~$\mathcal{U}$, the deviation set~$\overline{\mathcal{W}}$ and the anomaly set~$\mathcal{A}$ are provided with QC-based approximations given by~(\ref{eq_def_inter_sets_safety})-(\ref{eq_def_inter_sets_A}).
		Assume that the time-index $k_f \in \mathbb{N}$ is fixed. \\
			Then $(i) \Rightarrow (ii)$.
			\begin{itemize}				
				\item[$(i)$] $\forall s \in  \llbracket 1, n_{\overline{\mathcal{S}}} \rrbracket$, 		
				$\exists \left\{\boldsymbol{\phi_{s,g}} \in \mathbb{R}^+\right\}_{g = 1}^{n_{\overline{\mathcal{X}}_0}}$,	
				$\exists \left\{\boldsymbol{\varphi_{s,h}} \in \mathbb{R} \right\}_{h = 1}^{m_{\overline{\mathcal{X}}_0}}$,	
				$\forall k \in \llbracket 0, k_f -1 \rrbracket$, 
				$\exists \left\{\boldsymbol{\pi_{s,k,i}} \in \mathbb{R}^+ \right\}_{i= 1}^{n_{\mathcal{U}}}$,	
				$\exists \left\{\boldsymbol{\varpi_{s,k,j}} \in \mathbb{R} \right\}_{j = 1}^{m_{\mathcal{U}}}$,	
				$\exists \left\{\boldsymbol{\rho_{s,k,p}} \in \mathbb{R}^+ \right\}_{p= 1}^{n_{\overline{\mathcal{W}}}}$,							
				$\exists \left\{\boldsymbol{\varrho_{s,k,q}} \in \mathbb{R} \right\}_{q = 1}^{m_{\overline{\mathcal{W}}}}$,							
				$\exists \left\{\boldsymbol{\sigma_{s,k,r}} \in \mathbb{R}^+\right\}_{r = 1}^{n_{\mathcal{A}}}$, 				
				$\exists \left\{\boldsymbol{\varsigma_{s,k,t}} \in \mathbb{R} \right\}_{t = 1}^{m_{\mathcal{A}}}$, 				
				such that:
				\end{itemize}	
					\begin{align}				
						&M_{\overline{x}_{k_f}}^T \overline{S}_s M_{\overline{x}_{k_f}} - M_{\overline{x}_0}^T \left( \sum_{g=1}^{n_{\overline{\mathcal{X}}_0}}{\boldsymbol{\phi_{s,g}} \overline{X}_g } + \sum_{h=1}^{m_{\overline{\mathcal{X}}_0}}{\boldsymbol{\varphi_{s,h}} \overline{X}_h^e }  \right) M_{\overline{x}_0}
						\nonumber \\
						-	&\sum_{k=0}^{k_f - 1}{N_{u_k^a}^T \left( \sum_{i=1}^{n_{\mathcal{U}}}{ \boldsymbol{\pi_{s,k,i}} U_i} + \sum_{j=1}^{m_{\mathcal{U}}}{ \boldsymbol{\varpi_{s,k,j}} U_j^e} \right) N_{u_k^a}} \nonumber \\
						-
							&\sum_{k=0}^{k_f - 1}{E_k^T \left( \sum_{p=1}^{n_{\overline{\mathcal{W}}}}{ \boldsymbol{\rho_{s,k,p}} \overline{W}_p} + \sum_{q=1}^{m_{\overline{\mathcal{W}}}}{ \boldsymbol{\varrho_{s,k,q}} \overline{W}_q^e} \right) E_k}
							\nonumber \\
						-
							&\sum_{k=0}^{k_f - 1}{F_k^T \left( \sum_{r=1}^{n_{\mathcal{A}}}{ \boldsymbol{\sigma_{s,k,r}} A_r} + \sum_{t=1}^{m_{\mathcal{A}}}{ \boldsymbol{\varsigma_{s,k,t}} A_t^e} \right) F_k}
							\geq 0
							\label{eq_cond_LMI}
					\end{align}	
					with 
					\begin{align*}
							M_{\overline{x}_k} &= \begin{bmatrix} A^k & P_k & Q_k & 0 \\ 0 & 0 & 0 & 1 \end{bmatrix} \\
							P_k &= \begin{bmatrix} A^{k-1} W & \ldots  & A W &  W & 0_{(n+l)\times (k_f-k)(n_v+n_w)}  \end{bmatrix} \\
							Q_k &= \begin{bmatrix} A^{k-1} B_a &  \ldots & A B_a &  B_a & 0_{(n+l)\times (k_f-k)m_a} \end{bmatrix} \\
							N_{u_k^a} &= C_u M_{\overline{x}_k} + V_uE_k + \Gamma_a F_k
					\end{align*}
					where $P_0 = 0$ and	$Q_0 = 0$,							
					and $E_k$, $F_k$ are such that: 	
					\begin{equation*}
							E_k z_{k_f}	
												= \begin{bmatrix} \overline{w}_k \\ 1 \end{bmatrix}
							\qquad
							F_k z_{k_f}	
												= \begin{bmatrix} a_k \\ 1 \end{bmatrix}												
					\end{equation*}						
					where $z_{k_f}^T := \begin{bmatrix}   \overline{x}_0^T & \overline{w}_{\llbracket 0, k_f-1 \rrbracket}^T & a_{\llbracket 0, k_f-1 \rrbracket}^T & 1 \end{bmatrix}^T$.
			\begin{itemize}
				\item[$(ii)$] The system is safe at time-index~$k_f$: $\overline{x}_{k_f} \in \mathcal{S}$ for all initial states~$\overline{x}_0 \in \overline{\mathcal{X}}_0$, for all anomalies $a_k \in \mathcal{A}$ resulting in admissible inputs $u_k^a \in \mathcal{U}$, and for all deviation elements ~$\overline{w}_k \in \overline{\mathcal{W}}$.
			\end{itemize}
	\end{thm}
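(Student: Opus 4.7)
The plan is to reduce everything to a single augmented vector and then invoke the S-procedure of Lemma~\ref{lem_Sprocedure}. Define the lifted vector $z_{k_f}$ as in the theorem statement, stacking the initial state, the sequence of disturbances, the sequence of anomalies, and a trailing~$1$. By unrolling the recursion in (\ref{eq_SKA}), one obtains
\begin{equation*}
\overline{x}_{k} = A^{k}\overline{x}_0 + \sum_{j=0}^{k-1} A^{k-1-j}\bigl(W\overline{w}_j + B_a a_j\bigr),
\end{equation*}
which can be packaged as $[\overline{x}_k^T,\, 1]^T = M_{\overline{x}_k} z_{k_f}$ with exactly the $M_{\overline{x}_k}$, $P_k$, $Q_k$ given in the statement. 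The output equation then gives $[{(u_k^a)}^T,\, 1]^T = N_{u_k^a} z_{k_f}$ once $E_k$ and $F_k$ are introduced as the projectors extracting $[\overline{w}_k^T,\, 1]^T$ and $[a_k^T,\, 1]^T$ from $z_{k_f}$.

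Next, I would rewrite every set membership in (ii) as a quadratic (in)equality in $z_{k_f}$. The initial-state constraints $\sigma_{\overline{X}_g}(\overline{x}_0) \geq 0$ and $\sigma_{\overline{X}_h^e}(\overline{x}_0) = 0$ become $z_{k_f}^T M_{\overline{x}_0}^T \overline{X}_g M_{\overline{x}_0} z_{k_f} \geq 0$ and the analogous equality; similarly the disturbance, anomaly, and admissible-input constraints at each $k \in \llbracket 0, k_f-1 \rrbracket$ turn into QC/QCE conditions on $z_{k_f}$ via $E_k$, $F_k$, and $N_{u_k^a}$. The conclusion to be established, $\overline{x}_{k_f} \in \overline{\mathcal{S}}$, is by (\ref{eq_def_inter_sets_safety}) implied by $\sigma_{\overline{S}_s}(\overline{x}_{k_f}) \geq 0$ for every $s \in \llbracket 1, n_{\overline{\mathcal{S}}} \rrbracket$, which in the lifted variable reads $z_{k_f}^T M_{\overline{x}_{k_f}}^T \overline{S}_s M_{\overline{x}_{k_f}} z_{k_f} \geq 0$.

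At this stage, for each fixed $s$, I would apply Lemma~\ref{lem_Sprocedure} with $\sigma_S$ taken as the lifted safety QC and with the family of QCs and QCEs consisting of \emph{all} the lifted constraints over $g$, $h$, $(i,k)$, $(j,k)$, $(p,k)$, $(q,k)$, $(r,k)$, $(t,k)$. The multipliers supplied by (i), namely $\boldsymbol{\phi_{s,g}}, \boldsymbol{\pi_{s,k,i}}, \boldsymbol{\rho_{s,k,p}}, \boldsymbol{\sigma_{s,k,r}} \in \mathbb{R}^+$ and $\boldsymbol{\varphi_{s,h}}, \boldsymbol{\varpi_{s,k,j}}, \boldsymbol{\varrho_{s,k,q}}, \boldsymbol{\varsigma_{s,k,t}} \in \mathbb{R}$, are precisely those that make inequality (\ref{eq_cond_LMI}) the sufficient LMI condition $(ii)$ of Lemma~\ref{lem_Sprocedure}. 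Since this holds for every $s$, we deduce $\overline{x}_{k_f} \in \bigcap_s \overline{\mathcal{S}}_s \subseteq \overline{\mathcal{S}}$, which is (ii).

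The main obstacle is bookkeeping rather than mathematics: verifying that the matrices $M_{\overline{x}_k}$, $P_k$, $Q_k$, $N_{u_k^a}$, $E_k$, $F_k$ are consistent with the unrolled dynamics (in particular that the padding with zero blocks on the future $\overline{w}_j$ and $a_j$ is correct, and that $P_0 = Q_0 = 0$ correctly handles the initial time), and then checking that gathering one set of multipliers per constraint-per-time-step in (\ref{eq_cond_LMI}) matches exactly the expression produced by applying Lemma~\ref{lem_Sprocedure} a single time on $z_{k_f}$. Once this structural identification is carried out, the implication $(i) \Rightarrow (ii)$ is an immediate consequence of the S-procedure, and the iterative search over $k_f$ in Algorithm~\ref{Algo_CT_computation} yields the claimed under-estimate of the critical-time.
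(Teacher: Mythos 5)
Your proposal is correct and follows essentially the same route as the paper's proof: lifting all variables into the single vector $z_{k_f}$ via the unrolled recursion, recognizing $M_{\overline{x}_{k_f}}z_{k_f}$, $M_{\overline{x}_0}z_{k_f}$, $N_{u_k^a}z_{k_f}$, $E_k z_{k_f}$, $F_k z_{k_f}$ as the relevant quadratic-form arguments, applying Lemma~\ref{lem_Sprocedure} once per safety index~$s$ with exactly the multiplier families of condition~$(i)$, and concluding through the inclusions~(\ref{eq_def_inter_sets_safety})--(\ref{eq_def_inter_sets_A}). No gap to report.
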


	\begin{proof}
			Using the recursive equation (\ref{eq_SKA}), it comes	
		\begin{equation*}
			M_{\overline{x}_{k_f}} z_{k_f}	 = \begin{bmatrix} \overline{x}_{k_f} \\ 1 \end{bmatrix}
			\quad
			M_{\overline{x}_0}z_{k_f} = \begin{bmatrix} \overline{x}_{0} \\ 1 \end{bmatrix}
			\quad
				N_{u_k^a} z_{k_f}	
												= \begin{bmatrix} u_k^a \\ 1 \end{bmatrix}	
		\end{equation*}			
				Then, by applying Lemma~\ref{lem_Sprocedure} to~(\ref{eq_cond_LMI}),
				$\sigma_{\overline{S}_s}(\overline{x}_f) \geq 0$ holds for all $\overline{x}_0$, $ \left\{ \overline{w}_k \right\}_{k=0}^{k_f-1}$, and $\left\{ u_{k}^a \right\}_{k=0}^{k_f-1}$ such that:
			\begin{align*}
					&\forall  g \in \llbracket 1, n_{\overline{\mathcal{X}}_0} \rrbracket, \ \sigma_{\overline{X}_g}(\overline{x}_0) \geq 0						
					\quad
					\forall  h \in \llbracket 1, m_{\overline{\mathcal{X}}_0} \rrbracket, \ \sigma_{\overline{X}_h^e}(\overline{x}_0) = 0				
			\\
					&\forall  i \in \llbracket 1, n_{\mathcal{U}} \rrbracket, \ \sigma_{U_i}(u_{k}^a) \geq 0						
				 \	\quad
					\forall  j \in \llbracket 1, m_{\mathcal{U}} \rrbracket, \ \sigma_{U_j^e}(u_{k}^a) = 0				
				\\
					&\forall  p \in \llbracket 1, n_{\overline{\mathcal{W}}} \rrbracket, \ \sigma_{\overline{W}_p}(\overline{w}_k) \geq 0
											\quad
					\forall  q \in \llbracket 1, m_{\overline{\mathcal{W}}} \rrbracket, \ \sigma_{\overline{W}_q^e}(\overline{w}_k) = 0		
					\\
					&\forall  r \in \llbracket 1, n_{\mathcal{A}} \rrbracket, \ \sigma_{A_r}(a_k) \geq 0						
				 \	\quad
					\forall  t \in \llbracket 1, m_{\mathcal{A}} \rrbracket, \ \sigma_{A_t^e}(a_k) = 0							
			\end{align*}					
		Thus, 
		for all $\overline{x}_0 \in \left\{ \cap_{g=1}^{n_{\overline{\mathcal{X}}_0}}{\ \overline{\mathcal{X}}_g} \right\} \bigcap \left\{ \cap_{h=1}^{m_{\overline{\mathcal{X}}_0}}{\ \overline{\mathcal{X}}_h^e} \right\}$,
		for all $\left\{ a_k\right\}_{k=0}^{k_f-1}$ such that $a_k \in \left\{ \cap_{r=1}^{n_{\mathcal{A}}}{\ \mathcal{A}_r} \right\} \bigcap \left\{ \cap_{t=1}^{m_{\mathcal{A}}}{\ \mathcal{A}_t^e} \right\}$ and $u_{k}^a \in \left\{ \cap_{i=1}^{n_{\mathcal{U}}}{\ \mathcal{U}_i} \right\} \bigcap \left\{ \cap_{j=1}^{m_{\mathcal{U}}}{\ \mathcal{U}_j^e} \right\}$, 
		and 				 for all $ \left\{ \overline{w}_k \right\}_{k=0}^{k_f-1}$ such that
		$\overline{w}_k \in \left\{ \cap_{p=1}^{n_{\overline{\mathcal{W}}}}{\ \overline{\mathcal{W}}_p} \right\} \bigcap \left\{ \cap_{q=1}^{m_{\overline{\mathcal{W}}}}{\ \overline{\mathcal{W}}_q^e} \right\}$,
		one has $\overline{x}_{k_f} \in \bigcap_{s=1}^{n_{\overline{\mathcal{S}}}}{\overline{\mathcal{S}}_s}$.
	 The inclusions~{(\ref{eq_def_inter_sets_safety})-(\ref{eq_def_inter_sets_A})} lead to the conclusion.
	\end{proof}

	\begin{remark}
	Condition~$(i)$ is a feasibility optimization problem with LMI constraints. It is then solvable in polynomial-time.
	This computational advantage comes at the price of the conservatism of condition~$(i)$, that is only sufficient. This results from the application of the S-procedure (see \cite{Jon:01b}, and the references therein, for a detailed discussion).
	\end{remark}
	
		
		%

		In order to compute the critical-time~$k_c$, the following iterative LMI-based algorithm (Algorithm~\ref{Algo_CT_computation}) is proposed: based on the sufficient condition provided by Theorem~\ref{thm_CT_computation}, the time index~$k_f$ is incremented until no solution is found to the LMI feasibility problem. 
		An under-estimate~$k_f^*$ of the critical-time~$k_c$ is so obtained~$k_f^* \leq k_c$ and ensures that the system is safe at least until the obtained result~$k_f^*$.

	
\begin{algorithm}[!htb]
\DontPrintSemicolon
	$k_f \leftarrow 0$ \\
	$t \leftarrow true$ \\
   \While{$t$}
   {
			$k_f \leftarrow k_f + 1$ \\
			$t \leftarrow isFeasible$(  	
			 $\forall s \in  \llbracket 1, n_{\overline{\mathcal{S}}} \rrbracket$, 		
				$\exists \left\{\boldsymbol{\phi_{s,g}} \in \mathbb{R}^+\right\}_{g = 1}^{n_{\overline{\mathcal{X}}_0}}$,	
				$\exists \left\{\boldsymbol{\varphi_{s,h}} \in \mathbb{R} \right\}_{h = 1}^{m_{\overline{\mathcal{X}}_0}}$,	
				$\forall k \in \llbracket 0, k_f -1 \rrbracket$, 
				$\exists \left\{\boldsymbol{\pi_{s,k,i}} \in \mathbb{R}^+ \right\}_{i= 1}^{n_{\mathcal{U}}}$,	
				$\exists \left\{\boldsymbol{\varpi_{s,k,j}} \in \mathbb{R} \right\}_{j = 1}^{m_{\mathcal{U}}}$,	
				$\exists \left\{\boldsymbol{\rho_{s,k,p}} \in \mathbb{R}^+ \right\}_{p= 1}^{n_{\overline{\mathcal{W}}}}$,							
				$\exists \left\{\boldsymbol{\varrho_{s,k,q}} \in \mathbb{R} \right\}_{q = 1}^{m_{\overline{\mathcal{W}}}}$,							
				$\exists \left\{\boldsymbol{\sigma_{s,k,r}} \in \mathbb{R}^+\right\}_{r = 1}^{n_{\mathcal{A}}}$, 				
				$\exists \left\{\boldsymbol{\varsigma_{s,k,t}} \in \mathbb{R} \right\}_{t = 1}^{m_{\mathcal{A}}}$, 	
%
				such that (\ref{eq_cond_LMI}) holds.
			) 
			\\
			/* true if a solution is found /*
   }	
	$k_f \leftarrow k_f-1$ \\
	\textbf{return} $k_f$
\nonl  \
\caption{Critical-time computation}
\label{Algo_CT_computation}
\end{algorithm}		
%
%


\section{Case study - the quadruple-tank system} \label{sec_illustration} 



		\begin{figure}[!htb]
      \centering
      \includegraphics[width=0.75\columnwidth]{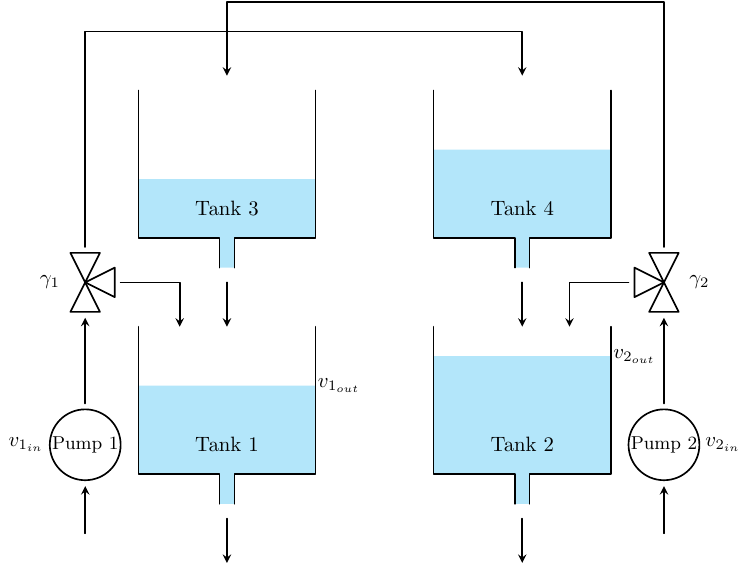}
      \caption{The quadruple-tank system}
      \label{fig_4_water_tanks}
   \end{figure}	
	
		The quadruple-tank system (Fig.~\ref{fig_4_water_tanks}) originally described in~\cite{Joh:00}, is considered to illustrate the benefits of the critical-time.
		This benchmark has been used in the CPS security literature for simulating cyber-attacks, evaluating their physical impacts and developing counter-measures (see~\cite{TSSJ:15,SRPVQ:19} for instance). 
		It consists of four tanks with bottom outlet holes, arranged such that tank~3 is above	tank~1 and tank~4 above tank~2.
		The inputs are the voltages~$v_{1_{in}}$, $v_{2_{in}}$ applied to two pumps, and the outputs are the voltages~$v_{1_{out}}$, $v_{2_{out}}$ from level measurement devices of tanks~1 and~2.
		The input liquid from pump~1 (resp. pump~2) is split into tank~$1$ and~$4$ (tank~$2$ and~$3$) using a valve with coefficient~$\gamma_1~\in~(0,1)$ (resp. $\gamma_2 \in (0,1)$).
		The system is modeled as follows:
		\begin{equation}
				\left\{ 
						\begin{aligned}
							\dot{h}(t) &= A_P \sqrt{h(t)} + B_P(\gamma_1,\gamma_2) v_{in}(t) \\
							v_{out}(t) &= C_P h(t) + d(t)
						\end{aligned}
				\right.
				\label{eq_QTP}
		\end{equation}
		%
		%
		where~$h = \begin{bmatrix} h_1 & h_2 & h_3 & h_4 \end{bmatrix}^T$, $h_i$ the liquid level of tank~$i$,
		$v_{in} = \begin{bmatrix} v_{1_{in}} & v_{2_{in}} \end{bmatrix}^T$, $v_{out} = \begin{bmatrix} v_{1_{out}} & v_{2_{out}} \end{bmatrix}^T$ and~$d$ the sensor noise.
		The matrices~$A_P$ and~$C_P$ depend on constant physical parameters of the system, while~$B_P(\gamma_1,\gamma_2)$ is an affine matrix-valued function.
		See~\cite{Joh:00} for numerical values.

		%
		Due to physical limitations of the pumps, the inputs $v_{1_{in}}$ and $v_{2_{in}}$ belong to $[v_{min},v_{max}] = [0 \text{ V}, 10 \text{ V}]$.
		Moreover, for safety reasons, the levels~$h_1$ and~$h_2$ are constrained to belong to~$[h_{12_{min}},h_{12_{max}}] = [9 \text{ cm}, 20 \text{ cm}]$ and the levels~$h_3$ and~$h_4$ to~$[h_{34_{min}},h_{34_{max}}] = [0.1 \text{ cm}, 5 \text{ cm}]$.		
		In particular, this prevents the tanks to overflow or to be empty, and keeps a minimum output flow for tanks~$1$ and~$2$.		
		
		The system is remotely controlled through a communication network by PI controllers, that compute the control input~$v_c$ from a reference signal~$r$ and measurement $v_{out}$ such as~$v_{i_c} = K_i(1 + \frac{1}{T_i \cdot s} )(r_i - v_{i_{out}} )$ where~$K_1 = 3$, $K_2 = 2.7$, $T_1 = 30s$, $T_2 = 40s$.		
		The control objective is to set~$h_1$ and~$h_2$ to~$h_1^0 = 12.4 \text{ cm}$ and~$h_2^0 = 12.7 \text{ cm}$.
		Anomalies are assumed to appear when the process is at a stationary operating point~$(h^0,v_{in}^0)$.
		It is known~\cite{Joh:00} that, given~$(h_1^0, h_2^0)$, the values of~$h_3^0$, $h_4^0$ and~$v_{in}^0$ are fully determined by the valves coefficients~$(\gamma_1, \gamma_2)$.
		The choice of~$(\gamma_1, \gamma_2)$ impacts then both the operating point and the system dynamics.
		Thus, the aim is to study the influence of~$(\gamma_1, \gamma_2)$ on the critical-time.
		In addition to evaluate the time allowed to defense mechanism given a couple~$(\gamma_1, \gamma_2)$, this provides a means to select couples with high critical-time.
		To this end, the critical-time is computed for a finite number of arbitrarily-chosen~$\gamma_1 \in [0.65, 0.75]$, $\gamma_2 \in [0.55, 0.65]$.
	
		

		%

		\subsection{Uncertain linear discrete-time approximation}
		
	For each couple~$(\gamma_1,\gamma_2)$, the corresponding model~(\ref{eq_QTP}) is linearized around the associated operating point~$(h^0, v_{in}^0)$.
				Then, using the zero-order hold method with a sampling period of~$1 \text{s}$, the model~(\ref{eq_SKA}) is obtained, where~$\overline{w}_k$ represents the deviation between~(\ref{eq_QTP}) and the approximation, especially resulting from linearization-and-discretization errors and sensor noise.
				Based on several simulations, the set~$\overline{\mathcal{W}}$ is over-approximated (see Fig.~\ref{fig_W1_w1w3_Intersect_Ellips_Rectangle_with_grid}) by the intersection of two degenerated ellipsoids~$\overline{\mathcal{W}}_1$ and~$\overline{\mathcal{W}}_2$, 
\begin{equation*}
				\overline{W}_1 = \begin{bmatrix} Q_{\overline{W}_1} & 0 & s_{\overline{W}_1} \\ 0 & 0 & 0 \\ s_{\overline{W}_1}^T & 0 & r_{\overline{W}_1}  \end{bmatrix}
		\qquad
				\overline{W}_2 = \begin{bmatrix} Q_{\overline{W}_2} & 0 & s_{\overline{W}_2} \\ 0 & 0 & 0 \\ s_{\overline{W}_2}^T & 0 & r_{\overline{W}_2}  \end{bmatrix}
		\end{equation*}
		where
		\begin{equation*}
				Q_{\overline{W}_1} = \begin{bmatrix}
															 -0.567  	& 0 	& -0.489 &  0  \\
																	0     & 0   &  0     &  0    \\
															-0.489    & 0   & -0.448 &  0  \\
																	0     & 0   &  0     &  0  
															\end{bmatrix}
					\quad
					s_{\overline{W}_1} = \begin{bmatrix} 0.000263 \\ 0 \\ 0.000433 \\ 0    \end{bmatrix}
					\quad
					r_{\overline{W}_1} = -1.20 \cdot 10^{-8}
		\end{equation*}
		\begin{equation*}
				Q_{\overline{W}_2} = \begin{bmatrix}
																	0     & 0   &  0     &  0      \\					
																	0     & -0.611 & 0   & -0.480  \\
																	0     & 0   &  0     &  0     \\	
																	0     & -0.480 & 0   & -0.409  \\	
															\end{bmatrix}			
					\quad
					s_{\overline{W}_2} = \begin{bmatrix} 0 \\ 0.000569 \\ 0 \\ 0.000501    \end{bmatrix}			
					\quad
					r_{\overline{W}_2} = -3.70 \cdot 10^{-7}
		\end{equation*}				
				and a hyper-rectangle 
				\begin{equation*}
						\left\{ \overline{w} \in \mathbb{R}^6, \ \overline{w}_{min}  \leq \overline{w} \leq \overline{w}_{max}  \right\}
				\end{equation*}
			with
			\begin{align*}
					\overline{w}_{min} &= - \begin{bmatrix} 0.013 & 0.0027 & 0 &  0 &  0.137 &  0.126 \end{bmatrix}^T \\
					\overline{w}_{max} &= \begin{bmatrix}  0.00011 &  0.0013 & 0.014 & 0.0045 & 0.143 & 0.147 \end{bmatrix}^T
			\end{align*}		 
			Similarly, the initial set~$\overline{\mathcal{X}}_0$ is over-approximated by the hyper-rectangle
			\begin{equation*}
						\left\{ \overline{x}_0 \in \mathbb{R}^6, \ \overline{x}_{0_{min}}  \leq \overline{x}_0 \leq \overline{x}_{0_{max}}  \right\}
			\end{equation*}
			with
			\begin{align*}
					\overline{x}_{0_{min}} &= - \begin{bmatrix} 0.065 & 0.047 & 0.032 &  0.028 &  0.066 &  0.061 \end{bmatrix}^T \\
					\overline{x}_{0_{max}} &= \begin{bmatrix}  0.066 &  0.056 & 0.044 & 0.024 & 0.068 & 0.077 \end{bmatrix}^T
			\end{align*}				

		\begin{figure}[!htb]
      \centering
      \includegraphics[width=0.75\columnwidth]{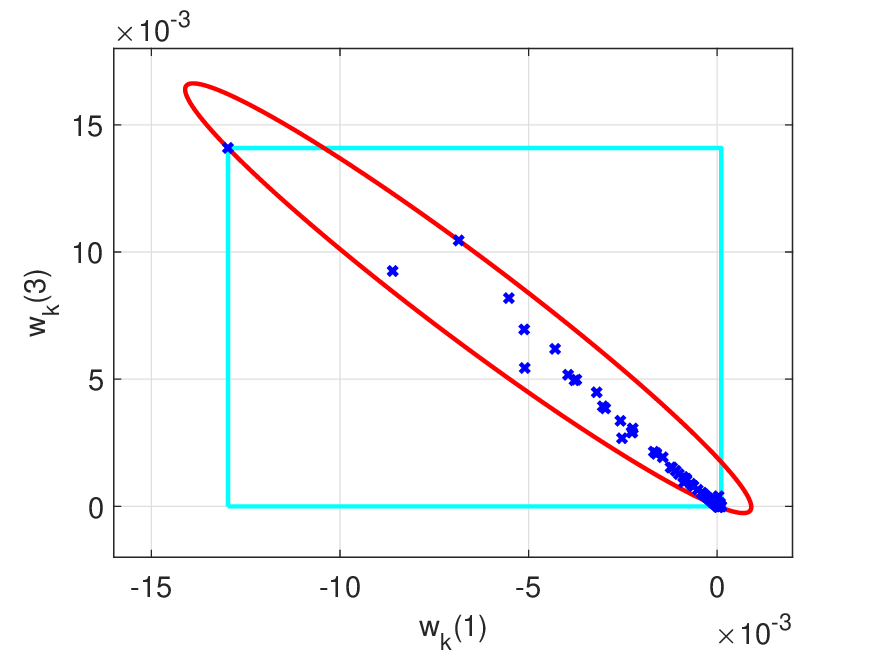}
      \caption{Projection of the hyper-rectangle (cyan solid line) and a degenerated ellipsoid (red solid line) in the plane~$(\overline{w}_k(1), \overline{w}_k(3))$. 
			Blue crosses denote samples coming from simulations. 
			The intersection of the rectangle and the ellipse includes the whole set of crosses.}
      \label{fig_W1_w1w3_Intersect_Ellips_Rectangle_with_grid}
   \end{figure}

		\subsection{Critical-time computation and usage}

			Algorithm~\ref{Algo_CT_computation} is now applied to estimate the critical-time in three scenarios.
			The computation is performed using \mbox{\textsc{matlab}} R2022b, using the semidefinite mode of the \textsc{cvx} package~\cite{GrB:20} with the solver \textsc{mosek}~\cite{Mos:21}, on a computer with 4-core 3.00 GHz CPU and 32GB RAM.

		\subsubsection{Anomaly-specific scenario}

				A DoS anomaly, resulting for instance from an unreliable communication network or a jamming cyber-attack, is first considered.
				When an anomaly is especially risky, an anomaly-specific defense strategy may be designed in the treatment stage.
				As a case in point, a DoS may be detected using acknowledgment-based communication protocol such as TCP~\cite{SSFPS:07}.
				This anomaly is modeled as~$(\Gamma_u,\Gamma_a,\mathcal{A}) = (0_2,I_2,\mathcal{U}_{\gamma})$ where 
				\begin{equation*}
				\mathcal{U}_{\gamma} = \left\{ u \in \mathbb{R}^{2}, \ u = \begin{bmatrix} u_{\gamma_1} \\ u_{\gamma_2} \end{bmatrix}, \ u_{\gamma_i} = 0 - v_{0_i}  \right\}
				\end{equation*}
				recalling that the values of~$v_0$ depends on~$(\gamma_1,\gamma_2)$.
				The result is displayed in Table~\ref{tab_CT_by_Alg_DoS}.
				Similarly, an attack-by-upper saturation is considered and the estimated critical-time is displayed in Table~\ref{tab_CT_by_Alg_Uppersat}. 
				Both results are compared with the result obtained from simulation on the non-linear model~(\ref{eq_QTP}).
				It can be especially observed that results from Algorithm~\ref{Algo_CT_computation} and from simulation are close, suggesting a moderate conservatism of the computing approach developed in this paper.

\begin{table}[!htb]
\caption{Critical-time (s) - DoS (diff. simulation vs algorithm)}
\label{tab_CT_by_Alg_DoS}
\centering
\begin{tabular}{c|ccccc}
\toprule
	\diagbox{$\gamma_1$}{$\gamma_2$}	   & $0.55$      & $0.575$ & $0.60$ & $0.625$ & $0.65$  \\ \midrule
  $0.65$  & $13$ 				& $12$         & $10$ $(+1)$ & $8$ $(+1)$  & $6$ $(+1)$ \\
  $0.675$ & $13$ $(+1)$ & $12$ $(+1)$  & $11$ $(+1)$ & $10$ $(+1)$ & $8$ $(+1)$ \\ 
	$0.70$  & $14$ $(+1)$ & $13$         & $12$        & $11$ $(+1)$ & $9$  $(+2)$ \\
  $0.725$ & $10$ $(+2)$ & $13$ $(+1)$  & $12$ $(+1)$ & $12$        & $10$ $(+1)$ \\ 
  $0.75$  & $7$ $(+2)$  & $10$ $(+1)$  & $11$ $(+2)$ & $12$        & $11$  $(+1)$  \\ \bottomrule
\end{tabular}
\end{table}

\begin{table}[!htb]
\caption{Critical-time (s) - Upper Saturation (diff. simulation vs algorithm)}
\label{tab_CT_by_Alg_Uppersat}
\centering
\begin{tabular}{c|ccccc}
\toprule
	\diagbox{$\gamma_1$}{$\gamma_2$}	   & $0.55$      & $0.575$ & $0.60$ & $0.625$ & $0.65$  \\ \midrule
  $0.65$  & $9$  & $11$         & $10$ 			  & $8$ 			  & $7$  \\
  $0.675$ & $7$  & $10$   			& $13$  			& $13$ $(+1)$ & $13$  \\ 
	$0.70$  & $6$  & $9$         	& $12$        & $12$ $(+1)$ & $13$   \\
  $0.725$ & $5$  & $8$   				& $11$  			& $12$        & $12$  \\ 
  $0.75$  & $4$  & $7$   				& $10$ $(+1)$ & $11$        & $11$  $(+1)$  \\ \bottomrule
\end{tabular}
\end{table}

			\subsubsection{Worst-case scenario}
			
			
		\begin{figure}[!htb]
      \centering
      \includegraphics[width=0.75\columnwidth]{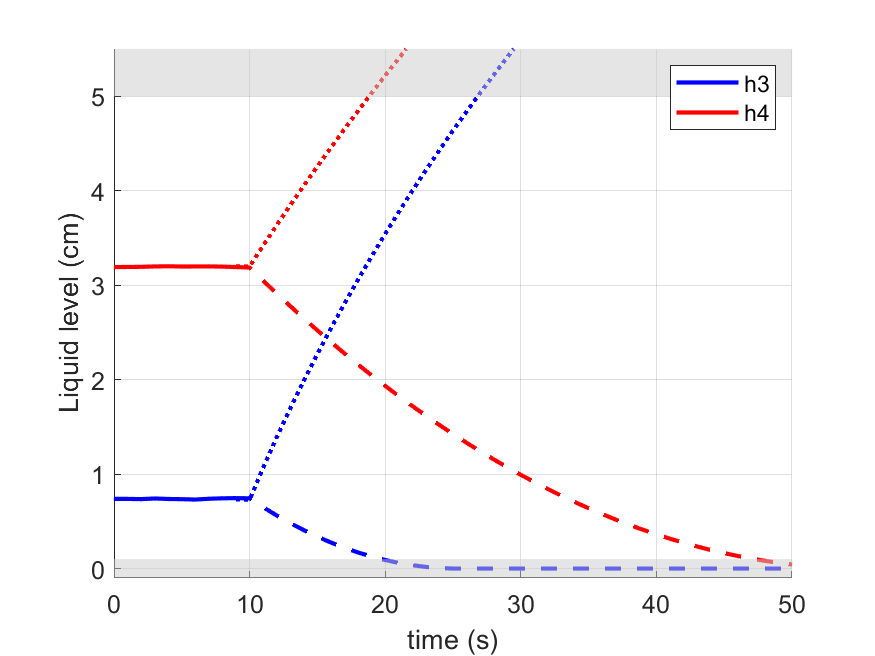}
      \caption{Impact on~$h_3$ and~$h_4$ of a DoS (dashed lines) and an attack by upper saturation (dotted lines) on the quadruple tank system originally at an operating point (solid lines). Grey patches refer to unsafe zones.}
      \label{fig_Simu_DoS_sat_attack}
   \end{figure}

				In this scenario, the class of anomalies leading to any admissible input $(\Gamma_u,\Gamma_a,\mathcal{A}) = (0,I,\mathcal{U})$ is considered.
				This worst-case scenario may typically happen when an adversary takes full control of the inputs, either through the communication network or by an attack on the controller.
				In that case, a difficulty is to design a defense strategy for anomalies with opposite impact (see for instance Fig.~\ref{fig_Simu_DoS_sat_attack}). 
				Indeed, one may observe for instance in Table~\ref{tab_CT_by_Alg_DoS} and Table~\ref{tab_CT_by_Alg_Uppersat} that the couple $(\gamma_1,\gamma_2) = (0.70, 0.55)$ leads to the highest critical-time ($14~\text{s}$) for a DoS anomaly but low critical-time ($6~\text{s}$) when considering an attack-by-upper saturation.
				While an intuitive solution may be found on this example and these two particular anomalies, a suitable trade-off is expected to be more challenging to be made for numerous and diverse anomalies. 
				On the other hand, the computation of the critical-time for all admissible anomalies naturally incorporates this tradeoff, without needing to simulate each element of~$\mathcal{U}$.
				
		The computed under-estimation of the critical-time in this scenario is displayed in Table~\ref{tab_CT_by_Alg}.
		In addition, the computational time needed by Algorithm~\ref{Algo_CT_computation} at each iteration is plotted on Fig.~\ref{fig_average_time_by_iter_worstcase}. As expected, the computational time increases moderately with the time horizon~$k$ and remains tractalbe..
		The execution time of Algorithm~\ref{Algo_CT_computation} is $7.3~s$ on average.
				

\begin{table}[!htb]
\caption{Critical-time (s) - Worst-case scenario} 
\label{tab_CT_by_Alg}
\centering
\begin{tabular}{c|ccccc}
\toprule
	\diagbox{$\gamma_1$}{$\gamma_2$}	& $0.55$ & $0.575$ & $0.60$ & $0.625$ & $0.65$  \\ \midrule
  $0.65$  & $9$ & $11$ & $10$ & $8$  & $6$  \\
  $0.675$ & $7$ & $10$ & $11$ & $10$ & $8$  \\ 
	$0.70$  & $6$ & $9$  & $12$ & $11$ & $9$  \\
  $0.725$ & $5$ & $8$  & $11$ & $12$ & $10$ \\ 
  $0.75$  & $4$ & $7$  & $10$ & $11$ & $11$   \\ \bottomrule
\end{tabular}
\end{table}

\begin{figure}[!htb]
  \centering
   \includegraphics[width=0.75\columnwidth]{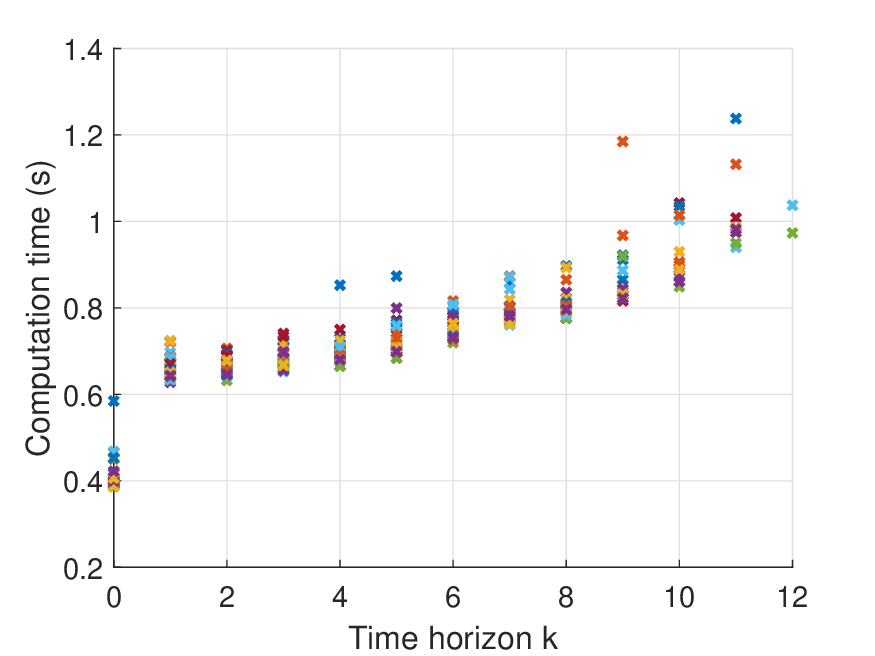}
   \caption{Computational time per iteration in the worst-case scenario}
   \label{fig_average_time_by_iter_worstcase}
\end{figure}

	\subsubsection{Channel-dependent scenario}
		
		The classes of anomalies leading to any admissible input for a single input channel is now considered.
		These are modeled by 
		\begin{equation*}
					(\Gamma_{u_1}, \Gamma_{a_1}, \mathcal{A}_1) = \left(\begin{bmatrix} 0 & 0 \\ 0 & 1 \end{bmatrix}, \begin{bmatrix} 1 \\ 0 \end{bmatrix}, \mathcal{U}_1\right)
		\end{equation*}
		for the first channel~$C_1$ and by 
		\begin{equation*}
				(\Gamma_{u_2}, \Gamma_{a_2}, \mathcal{A}_2) = \left(\begin{bmatrix} 1 & 0 \\ 0 & 0 \end{bmatrix}, \begin{bmatrix} 0 \\ 1 \end{bmatrix}, \mathcal{U}_2\right)
		\end{equation*}
		for the second channel~$C_2$, where
		\begin{equation*}
					\mathcal{U}_1 := \left\{ u \in \mathbb{R}, \ \begin{bmatrix} u \\ 0 \end{bmatrix} \in \mathcal{U} \right\} 
					\quad
					\mathcal{U}_2 := \left\{ u \in \mathbb{R}, \ \begin{bmatrix} 0 \\ u \end{bmatrix} \in \mathcal{U} \right\}				
		\end{equation*}
		
		The estimated critical-time by Algorithm~\ref{Algo_CT_computation} is displayed in Fig.~\ref{tab_CT_by_Alg_CD}.
			
			%

\begin{table}[!htb]
\caption{Critical-time (s) - Channel-dependent scenario C1 $|$ C2}
\label{tab_CT_by_Alg_CD}
\centering
\begin{tabular}{c|ccccc}
\toprule
	\diagbox{$\gamma_1$}{$\gamma_2$}	& $0.55$ & $0.575$ & $0.60$ & $0.625$ & $0.65$  \\ \midrule
  $0.65$  & $14$ $|$ $9$ & \hspace{0.18em} $11$ $|$ $12$ & $10$ $|$ $10$ & $8$ $|$ $8$  & \hspace{0.18em} $7$ $|$ $6$  \\
  $0.675$ & $14$ $|$ $7$ & \hspace{0.18em} $14$ $|$ $10$ & $13$ $|$ $12$ & $12$ $|$ $10$ & $11$ $|$ $8$  \\ 
	$0.70$  & $13$ $|$ $6$ & $13$ $|$ $9$  & $13$ $|$ $12$ & $12$ $|$ $11$ & $12$ $|$ $9$  \\
  $0.725$ & $10$ $|$ $5$ & $12$ $|$ $8$  & $13$ $|$ $11$ & $13$ $|$ $12$ & \hspace{0.18em}  $12$ $|$ $10$ \\ 
  $0.75$  & \hspace{0.18em} $7$ $|$ $4$ & $10$ $|$ $7$  & $11$ $|$ $10$ & $12$ $|$ $14$ & \hspace{0.18em} $12$ $|$ $11$   \\ \bottomrule
\end{tabular}
\end{table}

\subsubsection{Interpretation}

			In Table~\ref{tab_CT_by_Alg_DoS}-\ref{tab_CT_by_Alg_CD}, it can be observed that the choice of~$(\gamma_1,\gamma_2)$ has an important impact on the critical-time, varying from~$4 \text{s}$ to~$12 \text{s}$ in Table~\ref{tab_CT_by_Alg} for instance.
			The interest of the critical-time metric is then considered to be at least two-fold. 
			First, it provides a means to evaluate the maximum time allocated to defense mechanisms to detect and mitigate anomalies. 
			Second, it may be used as a prevention tool, as its maximization leads to extra-time allowed to defense mechanisms.
			Indeed, maximizing the critical-time may be viewed as a rational security allocation approach~\cite{CST:19}, where the aim is to tune degrees of freedom in the controlled system in order to make the security problem easier rather than developing complex defense strategies. 	
			
			Finally, the estimated critical-time associated with channel-dependent sharp anomalies is especially useful for the design of the channel-isolation step, required for the selection of a suitable mitigation strategy, in a detection-and-isolation scheme~\cite{HKKS:10}.
		Moreover, from a risk analysis perspective, it allows to identify which channel should be protected as a priority for defender with limited resources~\cite{TSSJ:15}.
		For instance, channel~C2 appears here more vulnerable than C1 as it generally leads to lower critical-time (Table~\ref{tab_CT_by_Alg_CD}).

\section{Conclusion} \label{sec_conclusion}

	In this paper, the critical-time metric was investigated for controlled systems subject to sharp input anomalies.  
	This metric has several advantages for off-line risk analysis as it allows for quantifying the time-window available to defense mechanisms during which safe operation can be ensured.	
	Using the QC framework, an iterative LMI-based algorithm has been established to compute the critical-time for an uncertain linear discrete-time model.
	Several classes of sharp input anomalies can be considered, depending on the input channel and the set of values the abnormal signal can take.
	In addition, the potential of the critical-time for risk analysis and treatment has been illustrated on the quadruple tank benchmark through the analysis of a representative set of distinct sharp anomalies scenarios.

	Future work includes extension to LPV models, the investigation of the critical-time for stealthy attacks, and the integration of specific communication models.
	
	
	



  \bibliographystyle{elsarticle-num} 
  \bibliography{BIB_LCSS_2022}

%
%
%
%
%
%
\end{document}